\keywords{Regular algebra; \texorpdfstring{$\omega$}{omega}-continuous algebra; iteration theories}
\begin{document}

\title[On Free \texorpdfstring{$\omega$}{omega}-Continuous and Regular Ordered Algebras]{On Free \texorpdfstring{$\omega$}{omega}-Continuous and Regular Ordered Algebras}

\author[Z.~\'Esik]{Zolt\'an \'Esik\rsuper{a}}
\address{\lsuper{a}Institute of Informatics, University of Szeged, P.O.~Box 652, H-6701 Szeged, Hungary}

\author[D.~Kozen]{Dexter Kozen\rsuper{b}}
\address{\lsuper{b}Computer Science Department, Cornell University, Ithaca, NY 14853--7501, USA}

\begin{abstract}
\noindent
We study varieties of certain ordered $\Sigma$-algebras with restricted completeness and continuity properties. We give a general characterization of their free algebras in terms of submonads of the monad of $\Sigma$-coterms. Varieties of this form are called \emph{quasi-regular}. For example, we show that if $E$ is a set of inequalities between finite $\Sigma$-terms, and if $\cVo$ and $\cVr$ denote the varieties of all $\omega$-continuous ordered $\Sigma$-algebras and regular ordered $\Sigma$-algebras satisfying $E$, respectively, then the free $\cVr$-algebra $\FrX$ on generators $X$ is the subalgebra of the corresponding free $\cVo$-algebra $\FoX$ determined by those elements of $\FoX$ denoted by the regular $\Sigma$-coterms. This is a special case of a more general construction that applies to any quasi-regular family. Examples include the *-continuous Kleene algebras, context-free languages, $\omega$-continuous semirings and $\omega$-continuous idempotent semirings, OI-macro languages, and iteration theories.
\end{abstract}

\maketitle

\section{Introduction}

We are concerned with varieties of certain ordered $\Sigma$-algebras with restricted completeness and continuity properties. There are many examples of such varieties; for example, the star-continuous Kleene algebras satisfy the property that any regular set of elements (a set definable by a regular expression) has a supremum. That is, if $K$ is a star-continuous Kleene algebra, $R$ is the canonical interpretation of regular expressions over a finite alphabet $A$ as sets of strings over $A$, and $I:A\to K$ is an interpretation in $K$, then $\sup\set{I(x)}{x\in R(e)}$ exists for any regular expression $e$. Star-continuity is axiomatized by a special case of this property, namely that $ab\star c = \sup_{n\ge 0} ab^{n}c$ for any $a,b,c\in K$. This axiom says that $b\star$ is the supremum of the $b^n$ and that multiplication is continuous with respect to such suprema.

The completeness and continuity properties of this and similar examples are \emph{restricted} in the sense that not all suprema need exist, but only those that are definable by some syntactic mechanism, depending on the variety. Other examples involve context-free languages, $\omega$-continuous semirings and $\omega$-continuous idempotent semirings, OI-macro languages, and iteration theories. We describe some of these in more detail in \S\ref{sec:examples}.

In this paper give a general account of the free algebras of such varieties in terms of submonads of the monad of $\Sigma$-coterms. Varieties of this form are called \emph{quasi-regular}. The intent is to provide a uniform framework in which to understand their completeness and continuity properties from a concrete perspective. Additionally, we show how relations between such varieties are reflected in relations between their free models; for example, we show that if $E$ is a set of inequalities between finite $\Sigma$-terms, and if $\cVo$ and $\cVr$ denote the varieties of all $\omega$-continuous ordered $\Sigma$-algebras and regular ordered $\Sigma$-algebras satisfying $E$, respectively, then the free $\cVr$-algebra $\FrX$ on generators $X$ is the subalgebra of the corresponding free $\cVo$-algebra $\FoX$ determined by those elements of $\FoX$ denoted by the regular $\Sigma$-coterms. This is a special case of a more general construction that applies to any quasi-regular family.

\section{\texorpdfstring{$\omega$}{Omega}-Continuous Algebras}

Let $\Sigma$ be a ranked alphabet, which will be fixed throughout.
A $\Sigma$-algebra is called \emph{ordered} if $A$ is partially ordered
by a relation $\le$ with least element $\bot^A$ and the algebraic operations
are monotone with respect to $\le$; that is, if $f \in \Sigma_n$ and
$a_i,b_i\in A$ with $a_i\le b_i$ for $1\le i\le n$,
then $f^A(\seq a1n) \le f^A(\seq b1n)$.
A morphism $h:A\to B$ of ordered $\Sigma$-algebras is a strict monotone
map that commutes with the algebraic operations:
\begin{gather*}
a \le b \Rightarrow h(a) \le h(b)\qquad\ h(\bot^A) = \bot^B\qquad\ h(f^A(\seq a1n)) = f^B(h(a_1),\ldots,h(a_n))
\end{gather*}
for all $a,b,\seq a1n \in A$ and $f\in\Sigma_n$, $n\geq 0$. We denote this category by $\SAlg$.

For each set $X$, there is a free ordered $\Sigma$-algebra $\FT X$ freely generated by $X$.
The elements of $\FT X$ are represented by the finite partial $\Sigma$-terms over $X$;
here \emph{partial} means that some subterms may be missing, which is the same as
having the empty term $\bot^{\FT X}$ in that position.
When $A$ is an ordered $\Sigma$-algebra, any function $X\to A$
extends uniquely to a morphism $\FT X\to A$. Thus the functor $\FT:\Set\to\SAlg$ is left adjoint
to the forgetful functor in the other direction.

An ordered $\Sigma$-algebra $A$ is \emph{$\omega$-continuous}~\cite{Courcelle83,ADJ77,Guessarian81} if
it is $\omega$-complete and the operations are $\omega$-continuous.
That is, any countable directed set (or countable chain) $C$ has a supremum $\bigvee C$,
and when $n\geq 0$, $f\in\Sigma_n$, and $C_i$ is a nonempty countable
directed set for $1\le i\le n$, then
\begin{align*}
f^A(\seq{\bigvee C}1n) &= \bigvee \set{f^A (\seq x1n)}{x_i\in C_i,\ 1\le i\le n}.
\end{align*}
A morphism of $\omega$-continuous $\Sigma$-algebras is an $\omega$-continuous ordered $\Sigma$-algebra morphism.
The category of $\omega$-continuous $\Sigma$-algebras is denoted $\SAlgo$.

For each set $X$, there is a free $\omega$-continuous
$\Sigma$-algebra $\CT X$ freely generated by $X$. The elements of $\CT X$
are the \emph{partial $\Sigma$-coterms} over $X$ (finite or infinite partial terms).
Formally, these are partial functions $t:\omega\star\pfun\Sigma+X$ with domain $\dom t$ such that
\begin{itemize}
\item
$\dom t$ is prefix-closed, and
\item
if $\alpha\in\omega\star$, $i\in\omega$, and $\alpha i\in\dom t$, then $t(\alpha)\in\Sigma_n$ for some $n\ge 1$ and $0\le i<n$ (that is, $t(\alpha)\not\in X\cup\Sigma_0$).
\end{itemize}
The ordering is defined by:
\begin{align*}
s \le t\ &\Leftrightarrow\ \forall\alpha\ \alpha\in\dom s \Rightarrow (\alpha\in\dom t \wedge s(\alpha)=t(\alpha));
\end{align*}
that is, $s$ can be obtained from $t$ by erasing some subterms.
The minimal element $\bot^{\CT X}$ is the unique such function with domain $\emptyset$.
The algebra $\FT X$ is the subalgebra of $\CT X$ consisting of those elements with finite domain.
As above, when $t\in \CT X$ and $A$ is an $\omega$-continuous $\Sigma$-algebra, any set function $X\to A$
extends uniquely to a morphism $\CT X\to A$ of $\omega$-continuous $\Sigma$-algebras.
Thus, as above, the functor $\CT:\Set\to\SAlgo$ is left adjoint
to the forgetful functor in the other direction.

Suppose that $X_\omega$ is a fixed countably infinite set. A set $E$ of formal inequalities
$t \sqsubseteq t'$ between terms in $\FT{X_\omega}$ determines
in the usual way a \emph{variety of
ordered $\Sigma$-algebras}, denoted $\cV$, and a \emph{variety of $\omega$-continuous
$\Sigma$-algebras}, devoted $\cVo$.
An ordered $\Sigma$-algebra $A$
belongs to $\cV$ if $h(s)\le h(t)$ in $A$ for any
morphism $h:\FT X_\omega\to A$ and $s\sqsubseteq t$ in $E$.
The class $\cVo$ contains all $\omega$-continuous
$\Sigma$-algebras in $\cV$.

It is known that free algebras exist in both $\cV$ and $\cVo$~\cite{Bloom76,Guessarian81}.
The free algebra $\FX$ in $\cV$ generated by a set $X$ is the algebra of terms $\FT X$
modulo the least congruence containing $E$ and the inequalities of $\FT X$.
This is also the $X$-generated ordered subalgebra of the
free $\omega$-continuous algebra $\FoX$ in $\cVo$.
Moreover, $\FoX$ is the completion of $\FX$ by $\omega$-ideals
(see \S\ref{sec:completion}).

We note that one of our examples of \S6, specifically Example~\ref{ex:iteration}, requires a typed or multisorted signature $\Sigma$, effectively restricting terms $\FT X$ and coterms $\CT X$ to well-typed submonads of the same. The exact requirement will be made clear in \S6. The cited results of~\cite{Bloom76,Guessarian81} are developed in the traditional untyped context for notational clarity, but it is straightforward to check that they extend to the typed case needed for this example.

\section{Free Extension}%
\label{sec:completion}

Every ordered algebra can be completed to an $\omega$-continuous algebra by the method of \emph{completion by $\omega$-ideals}. This construction constitutes a functor $\Io:\SAlg\to\SAlgo$ that is left adjoint to the forgetful functor in the other direction. This result is well known and can be proved by standard universal-algebraic argument~\cite{Bloom76,Guessarian81}, which we outline here.
We will later give a somewhat more general construction called \emph{free extension} that will allow us to apply this idea more widely.

For $C$ a subset of an ordered $\Sigma$-algebra $A$, define
\begin{align*}
\down C &= \set{c}{\exists a\in C\ c\le a}.
\end{align*}
A set $C$ is an \emph{order ideal} if it is nonempty, directed, and downward closed; that is,
if for any $a,b\in C$ there exists $c\in C$ such that $a,b\le c$, and $a\in C$ whenever
$a\le b$ and $b\in C$, i.e.~$C=\down C$. An order ideal $C\subs A$
is called an \emph{$\omega$-ideal} if it is countably generated; that is, there
is a countable directed set $C_0\subs C$ such that $C = \down{C_0}$.
The set $\Io A$ of all $\omega$-ideals of $A$ ordered by set inclusion $\subs$
is an $\omega$-complete poset with least element $\bot^{\Io A}=\{\bot^A\}$. The supremum
of an $\omega$-ideal $\AA$ of $\omega$-ideals is the union $\bigcup\AA$.
We can make $\Io A$ into an $\omega$-continuous algebra by defining
\begin{align}
f^{\Io A}(\seq C1n) &= \down{\set{f^A(\seq a1n)}{a_i\in C_i,\ 1\le i\le n}},\label{eq:distr}
\end{align}
the order ideal generated by the set $\set{f^A(\seq a1n)}{a_i\in C_i,\ 1\le i\le n}$. The action of $\Io$ on morphisms $h:A\to B$ is
\begin{align*}
& \Io h:\Io A\to\Io B && \Io h(C) = \down{\set{h(c)}{c\in C}}.
\end{align*}

The algebra $A$ can be embedded in $\Io A$ by
the ordered algebra morphism mapping $a\in A$ to the order
ideal $\down{\{a\}}$ generated by $a$. This is the unit of the adjunction.
The structure $\Io A$ is the \emph{free completion} of $A$ to an $\omega$-continuous ordered algebra~\cite{Bloom76,Guessarian81}.
In particular, $\CT X$ is isomorphic to $\Io(\FT X)$ for all $X$.

Ignoring the $\Sigma$-algebra operations for a moment, the essential structure of the functor $\Io$ as a completion procedure stems from a monad $(\Io,\bigcup,\down{\{-\}})$ on partial orders with $\bot$, where $\bigcup:\Io^2\to \Io$ and $\down{\{-\}}:\Id\to \Io$ are natural transformations satisfying the usual monad laws. Given $(X,\le,\bot^X)$, the monad produces the $\omega$-complete structure $(\Io X,\subs,\bot^{\Io X})$, where $\bot^{\Io X}=\{\bot^X\}$.

If $\Io$ is applied to an ordered $\Sigma$-algebra $A$, the operations of $\Sigma$ can be defined on $\Io A$ by~\eqref{eq:distr}, and the operations so defined are monotone and $\omega$-continuous: for $\seq\AA 1n\in \Io^2(A)$,
\begin{align*}
&f^{\Io A}(\seq{\bigcup\AA}1n) \\
&\quad= \down{\set{f^A(\seq a1n)}{a_i\in\bigcup\AA_i,\ 1\le i\le n}}\\
&\quad= \down{(\bigcup \set{\down{\set{f^A(\seq a1n)}{a_i\in C_i,\ 1\le i\le n}}}{C_i\in\AA_i,\ 1\le i\le n})}\\
&\quad= \bigcup\down{\set{ f^{\Io A}(\seq C1n)  }{C_i\in\AA_i,\ 1\le i\le n}}.
\end{align*}
The equation~\eqref{eq:distr} is essentially a distributive law~\cite{Beck69} that transforms a term over ideals to an ideal of terms, thereby allowing the algebraic operations of $\Sigma$ to commute with suprema.

\begin{exa}%
\label{ex:FTCT}
The free ordered algebra $\FT X$ and the free $\omega$-continuous ordered algebra $\CT X$ are Eilenberg-Moore algebras for the \emph{partial term monad} and \emph{partial coterm monad} for the signature $\Sigma$, respectively. These are monads on $\Set$ with components $\mu:\CT^2\to\CT$ and $\eta:\Id\to\CT$ (and similarly for $\FT$) defined as follows. For a set $X$ and $s\in\CT X$, define $\var s=\set{\alpha\in\dom s}{s(\alpha)\in X}$. For $t\in\CT^2 X$, $x\in X$, and $\alpha,\beta\in\omega\star$,
\begin{align*}
& \dom(\mu_X(t)) = \dom t \cup \set{\alpha\beta}{\alpha\in\dom t,\ t(\alpha)\in\var t,\ \beta\in\dom t(\alpha)} && \dom(\eta_X(x)) = \{\eps\}\\
& \mu_X(t)(\alpha) = t(\alpha),\ \alpha\in\dom t\setminus\var t && \eta_X(x)(\eps) = x\\
& \mu_X(t)(\alpha\beta) = t(\alpha)(\beta),\ \alpha\in\var t,\ \beta\in\dom t(\alpha)
\end{align*}
An $\omega$-ideal $C\subs\FT X$ is any nonempty downward-closed directed set. Any such set has a supremum in $\CT X$: $(\bigvee C)(\alpha) = t(\alpha)$ if $t\in C$ and $\alpha\in\dom t$, or undefined if no such $t$ exists.
%
The coterm $\bigvee C$ is well defined, as any two elements of $C$ must agree on the intersection of their domains, since $C$ is directed.
\end{exa}

More generally, let $K$ be a submonad of $\Io$. The monad operations are the same as in $\Io$.
An ordered $\Sigma$-algebra $A$ is said to be \emph{$K$-continuous} if it is $K$-complete and the algebraic operations are $K$-continuous.
That is, all ideals in $KA$ have suprema in $A$, and the algebraic operations preserve suprema.

One can perform the same completion construction with $K$ as with $\Io$, provided $KA$ forms a subalgebra of $\Io A$ under the operations~\eqref{eq:distr}; that is, the $\omega$-ideal
\begin{align}
f^{KA}(\seq C1n) &= \down{\set{f^A(\seq a1n)}{a_i\in C_i,\ 1\le i\le n}}\in KA\label{eq:continuous}
\end{align}
whenever $\seq C1n\in KA$.
The resulting ordered $\Sigma$-algebra $KA$ is called the \emph{free extension} of $A$ by the monad $K$. It is not $\omega$-continuous in general, but it is $K$-continuous; to wit, the supremum of an ideal $\AA\in K^2 A$ is $\bigcup\AA$, and the algebraic operations preserve them by the same proof as with $\Io$ above.

The free extension $KA$ of $A$ satisfies the following universal property analogous to~\cite[Theorem 3.3]{Bloom76} for $K=\Io$.

\begin{thm}%
\label{thm:univ}
Let $A,B$ be ordered $\Sigma$-algebras, $B$ $K$-continuous, and $h:A\to B$ an ordered $\Sigma$-algebra morphism. There is a unique $K$-continuous morphism $\hat h:KA\to B$ that agrees with $h$ on $A$ in the sense that $\hat h\circ\down{\{-\}} = h$.
\end{thm}
\begin{proof}
For $C\in KA$, define
\begin{align*}
\hat h(C) = \bigvee\set{h(a)}{a\in C}.
\end{align*}
The supremum on the right-hand side exists since $Kh(C)=\down{\set{h(a)}{a\in C}}\in KB$ and $B$ is $K$-complete. The map $\hat h$ agrees with $h$ on $A$ since $\hat h(\down{\{a\}}) = \bigvee\set{h(b)}{b\in\down{\{a\}}} = h(a)$. It is monotone, since if $C\subs D$, then
\begin{align*}
\hat h(C) = \bigvee\set{h(a)}{a\in C} \le \bigvee\set{h(a)}{a\in D} = \hat h(D).
\end{align*}
It is strict, since it agrees with $h$ on $A$ and $h$ is strict.

To show that $\hat h$ commutes with the algebraic operations,
\begin{align*}
\lefteqn{\hat h(f^{KA}(\seq C1n))}\quad\\
&= \hat h(\down{\set{f^A(\seq a1n)}{a_i\in C_i,\ 1\le i\le n}}) && \text{by definition of $f^{KA}$}\\
&= \bigvee\set{h(b)}{b\in\down{\set{f^A(\seq a1n)}{a_i\in C_i,\ 1\le i\le n}}} && \text{by definition of $\hat h$}\\
&= \bigvee\set{h(f^A(\seq a1n))}{a_i\in C_i,\ 1\le i\le n}\\
&= \bigvee\set{f^B(h(a_1),\ldots,h(a_n))}{a_i\in C_i,\ 1\le i\le n}\\
&= f^B(\bigvee\set{h(a_1)}{a_1\in C_1},\ldots,\bigvee\set{h(a_n)}{a_n\in C_n}) && \text{since $f^B$ is $K$-continuous}\\
&= f^B(\hat h(C_1),\ldots,\hat h(C_n)) && \text{by definition of $\hat h$.}
\end{align*}

To show that $\hat h$ is $K$-continuous, we wish to show that for any $\AA\in K^2 A$,
\begin{align*}
\hat h(\bigcup\AA) = \bigvee\set{\hat h(C)}{C\in\AA}.
\end{align*}
The supremum on the right-hand side exists since $K\hat h(\AA)=\down{\set{\hat h(C)}{C\in\AA}}\in KB$ and $B$ is $K$-complete. Then
\begin{align*}
\bigvee\set{\hat h(C)}{C\in\AA} &= \bigvee\set{\bigvee\set{h(a)}{a\in C}}{C\in\AA}
= \bigvee\set{h(a)}{a\in\bigcup\AA}
= \hat h(\bigcup\AA).
\end{align*}

Finally, we show that $\hat h$ is unique. Let $h':KA\to B$ be any other $K$-continuous morphism
that agrees with $h$ on $A$.
Let $C\in KA$ be arbitrary. Then $C = \bigcup\set{\down{\{a\}}}{a\in C}$ and
\begin{align*}
h'(C) &= h'(\bigcup\set{\down{\{a\}}}{a\in C})\\
&= \bigvee\set{h'(\down{\{a\}})}{a\in C} && \text{since $h'$ is $K$-continuous}\\
&= \bigvee\set{h(a)}{a\in C} && \text{since $h'$ agrees with $h$ on $A$}\\
&= \hat h(C).
\end{align*}
As $C$ was arbitrary, $h'=\hat h$.
\end{proof}

\section{Quasi-Regular Families}

In this section we study limited completeness and continuity conditions in which
not all suprema need exist, but only those of a certain syntactically restricted form.

For each set $X$, let $\Del X$ denote an ordered subalgebra of $\CT X$ containing $X$.
The set $\Del X$ is a set of partial coterms closed under the algebraic operations $\Sigma$.
Note that each finite term in $\FT X$ is in $\Del X$, thus $\FT X\subs\Del X\subs\CT X$.

We further assume for all $X$ and $Y$ and for all $\omega$-continuous morphisms $h:\CT X \to \CT Y$
with $\set{h(x)}{x\in X} \subs \Del Y$ that $\set{h(t)}{t\in\Del X} \subs \Del Y$.
A family of term algebras $\Del X$ for each set $X$ satisfying this property is called
a \emph{quasi-regular family}.

Under these conditions, the functor
$\Del$ (tacitly composed with the forgetful functor to $\Set$) determines a submonad $(\Del,\mu,\eta)$ of the coterm monad. The monad operations are the same, but with suitably restricted domain. Thus
$\eta_X:X\to\Del X$ makes a singleton term $x$ out of an element $x\in X$,
and $\mu_X:\Del^2X\to\Del X$ takes a coterm of coterms over
$X$ and collapses it to a coterm over $X$,
as described in Example~\ref{ex:FTCT}.

\begin{exa}
Two extremal examples of quasi-regular families are $\CT X$, the
partial coterms over $X$, and $\FT X$, the
partial terms over $X$. These are the maximum and minimum
quasi-regular families, respectively, for a given signature $\Sigma$.
The regular or algebraic trees~\cite{Courcelle83,Guessarian81},
or more generally, the set
of regular (or rational) trees of order $n$~\cite{Gallier81,Gallier85}
also form quasi-regular families.
Also, the union of the $n$-regular trees
over $X$ for all $n\geq 0$ is a quasi-regular family.
These examples are described more fully in \S\ref{sec:examples}.
\end{exa}

\begin{lem}
For any set $Y$, the set $\Del Y$ is uniquely determined by $\Del{X_\omega}$.
\end{lem}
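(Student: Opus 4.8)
The plan is to reduce the determination of $\Del X$ for an arbitrary set $X$ to the countable case, and then to transfer along renamings of variables; concretely, I will show that two quasi-regular families that agree on $X_\omega$ agree on every set. Throughout, for a function $f:Y\to X$ write $\CT f:\CT Y\to\CT X$ for the $\omega$-continuous morphism extending $Y\xrightarrow{f}X\subs\CT X$, and recall the standard fact that the coterm monad carries injections to injections; thus for $Y\subs X$ I freely identify $\CT Y$ with the sub-poset of $\CT X$ consisting of those coterms all of whose variables lie in $Y$. Since every symbol of $\Sigma$ has finite rank, each coterm $t\in\CT X$ is a finitely-branching tree, hence has countably many nodes and so mentions only countably many variables; writing $\mathrm{var}(t)\subs X$ for that countable set, we have $t\in\CT{\mathrm{var}(t)}\subs\CT X$.

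The first step is a localization lemma: \emph{for $Y\subs X$ one has $\Del Y=\Del X\cap\CT Y$}. The inclusion $\Del Y\subs\Del X\cap\CT Y$ is quasi-regularity applied to $\CT i$, where $i:Y\hookrightarrow X$ is the inclusion: $\CT i$ sends each generator into $\Del X$, hence sends $\Del Y$ into $\Del X$, and trivially into $\CT Y$. For the reverse inclusion, define $r:X\to\CT Y$ by $r(y)=\etaDel_Y(y)$ (the singleton coterm $y$) for $y\in Y$ and $r(x)=\bot$ for $x\in X\setminus Y$; this is a well-defined function into $\CT Y$ with $r(X)\subs\Del Y$, because singletons and $\bot$ all lie in $\FT Y\subs\Del Y$. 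By quasi-regularity the induced morphism $\hat r:\CT X\to\CT Y$ satisfies $\hat r(\Del X)\subs\Del Y$, while by freeness $\hat r$ restricts to the identity on $\CT Y$ (it extends the unit $Y\to\CT Y$). Hence any $t\in\Del X\cap\CT Y$ equals $\hat r(t)$ and therefore lies in $\Del Y$. Note this covers the degenerate case $Y=\emptyset$, where $r$ is the constant map at $\bot$.

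The second step transfers between countable sets. Given a countable set $X_0$, fix an injection $j:X_0\to X_\omega$ and let $Z=j(X_0)\subs X_\omega$. Functoriality of $\CT$ makes $\CT j:\CT X_0\to\CT Z$ an isomorphism of $\omega$-continuous algebras, and applying quasi-regularity in both directions along the mutually inverse bijections $j:X_0\to Z$ and $j^{-1}:Z\to X_0$ shows that $\CT j$ maps $\Del X_0$ bijectively onto $\Del Z$. By the localization lemma $\Del Z=\Del X_\omega\cap\CT Z$, so $\Del X_0=(\CT j)^{-1}\bigl(\Del X_\omega\cap\CT Z\bigr)$, which is a function of $\Del X_\omega$ alone (the choice of $j$ and $Z$ is pure bookkeeping, independent of the family). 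Consequently any two quasi-regular families agreeing on $X_\omega$ agree on every countable set.

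Finally, for an arbitrary set $X$ and any $t\in\CT X$, the localization lemma with $Y=\mathrm{var}(t)$ gives $t\in\Del X$ iff $t\in\Del{\mathrm{var}(t)}$, and by the previous paragraph membership in $\Del{\mathrm{var}(t)}$ is determined by $\Del X_\omega$ since $\mathrm{var}(t)$ is countable; hence $\Del X$ is determined by $\Del X_\omega$, as claimed. The step needing real care is the localization lemma: the identification of $\CT Y$ with a sub-poset of $\CT X$ (injectivity of the coterm monad on injections), the observation that every $\Del Y$ contains all singletons and $\bot$ so that the ``retraction'' $r$ genuinely lands in $\Del Y$, and the verification that $\hat r$ is the identity on $\CT Y$ in all cases, including $Y=\emptyset$. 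The remaining manipulations are routine arrow-chasing with the monad, distributive-law, and freeness data already assembled.
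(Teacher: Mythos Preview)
Your proof is correct and follows essentially the same strategy as the paper: reduce to countable variable sets and transport along injections into $X_\omega$ via quasi-regularity. The organization differs slightly. The paper writes down a single formula, $\Del Y=\{\hat h(t)\mid X\subs X_\omega,\ h:X\to Y\ \text{injective},\ t\in\Del X\}$, and argues both inclusions directly; you instead isolate the localization lemma $\Del Y=\Del X\cap\CT Y$ for $Y\subs X$ and build the rest on top of it. Your decomposition is arguably cleaner, and in particular your retraction argument (sending variables outside $Y$ to $\bot$) makes explicit a step the paper takes for granted: when the paper asserts ``$s\in\Del Y'$ for some finite or countable subset $Y'\subs Y$, as there are at most countably many subterms of $s$,'' what is immediate is only $s\in\CT{Y'}$; passing to $s\in\Del{Y'}$ is precisely your localization lemma. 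So your version is a bit more complete, at the cost of being longer.
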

\begin{proof}
Any injection $h:X\to Y$ lifts to a morphism $\Del h:\Del X\to\Del Y$ of $\omega$-continuous algebras. We claim that
\begin{align*}
\Del Y &= \set{\Del h(t)}{X\subs X_\omega,\ \text{$h:X\to Y$ is an injection},\ t\in\Del X}.
\end{align*}
The reverse inclusion holds by our assumption $\Del h(\Del X)\subs\Del Y$. For the forward inclusion, suppose
$s\in\Del Y$. Then $s\in\Del Y'$ for some finite or countable subset $Y'\subs Y$, as there are at most countably many
subterms of $s$. Let $X\subs X_\omega$ be of the same cardinality as $Y'$ and let $h:X\to Y'$ be a bijection.
Then $\Del h^{-1}(s)\in\Del X_\omega$, $h:X\to Y$ is an injection, and $s=\Del h(\Del h^{-1}(s))$.
\end{proof}

\begin{defi}
For coterms $s,t\in\CT X$, define $s\ll t$ if $s$ is finite, but agrees as a labeled tree with $t$ wherever it is defined;
that is, $s\in\FT X$ and $s\le t$. For $t\in\CT X$, the set $\set s{s\ll t}$ forms an $\omega$-ideal in $\FT X$.
\end{defi}

\begin{defi}%
\label{def:quasireg}
Let $A$ be an ordered $\Sigma$-algebra.
A set $B\subs A$ is called a \emph{$\Del$-set} if there is a coterm $t\in\Del A$
such that $B = \set{s^A}{s\ll t}$, where $s^A$ is the interpretation of the partial term $s$ in $A$.
As the evaluation map $s\mapsto s^A$ is monotone,
$B$ is a countable directed subset of $A$, therefore its down-closure $\down B$ is an $\omega$-ideal
of $A$. A \emph{$\Delta$-ideal} is the down-closure of a nonempty $\Del$-set $B$.
We say that an ordered algebra $A$ is \emph{$\Del$-regular} if
the suprema of all $\Del$-sets exist and the algebraic operations preserve
the suprema of nonempty $\Del$-sets.
\end{defi}

Note that $\Del X$ itself is a $\Del$-regular algebra due to the fact that $\Del$ is a monad. We will show that any $\Del$-regular algebra can be extended to an Eilenberg-Moore algebra for the monad $\Del$.
This means that the evaluation map $\eps_A:\FT A\to A$, $\eps_A(t)=t^A$ can be extended to domain $\Del A$ so that the following properties are satisfied:
\begin{gather}%
\label{eq:EM}
\begin{array}{c}
\begin{tikzpicture}[->, >=stealth', node distance=20mm, auto]
\small
 \node (NW) {$\Del^2A$};
 \node (NE) [right of=NW] {$\Del A$};
 \node (SW) [below of=NW, node distance=12mm] {$\Del A$};
 \node (SE) [right of=SW] {$A$};
 \path (NW) edge node {$\Del\eps_A$} (NE)
      edge node[swap] {$\mu_A$} (SW);
 \path (SW) edge[swap] node {$\eps_A$} (SE);
 \path (NE) edge node {$\eps_A$} (SE);
 \node (NW) [right of=NE] {$A$};
 \node (NE) [right of=NW] {$\Del A$};
 \node (SE) [below of=NE, node distance=12mm] {$A$};
 \path (NW) edge node {$\eta_A$} (NE)
      edge node[swap, xshift=3pt, yshift=3pt] {$\id_A$} (SE);
 \path (NE) edge node {$\eps_A$} (SE);
\end{tikzpicture}
\end{array}
\end{gather}
Moreover, there is a unique such extension making the evaluation map $\eps_A:\Del A\to A$ a $\Del$-continuous $\Sigma$-algebra morphism.

\begin{thm}%
\label{thm:Delalg1}
Let $A$ be a $\Del$-regular algebra. Extend the evaluation map to domain $\Del A$ by
\begin{align}
t^A = \bigvee\set{s^A}{s\ll t}.\label{eq:EMdef}
\end{align}
The supremum exists by the assumption that $A$ is $\Del$-regular. Then $A$ with the extended
evaluation map $\eps_A(t)=t^A$ for $t\in\Del A$ is an Eilenberg-Moore algebra for the monad $\Del$.
Moreover, $\eps_A:\Del A\to A$ is a $\Del$-continuous $\Sigma$-algebra morphism, and is the unique extension of $\eps_A:\FT A\to A$ for which this is so.
\end{thm}
\begin{proof}
We argue the latter statement first. The map $\eps_A$ is strict, as the term $\bot^A$ evaluates to itself. For monotonicity, let $s,t\in\Del A$, $s\le t$. By transitivity, $u\ll s$ implies $u\ll t$, thus
\begin{align*}
s^A = \bigvee\set{u^A}{u\ll s} \le \bigvee\set{u^A}{u\ll t} = t^A.
\end{align*}
For $\Del$-continuity, if $t\in\Del A$ then $\set s{s\ll t}$ is a $\Del$-set in $\Del A$ and $t=\bigvee\set s{s\ll t}$, thus
\begin{align*}
t^A = \bigvee\set{s^A}{s\ll t}\  &\Iff\ (\bigvee\set{s}{s\ll t})^A = \bigvee\set{s^A}{s\ll t}, 
\end{align*}
that is, the $\Del$-continuity condition for $t$ is exactly~\eqref{eq:EMdef}. Thus the extension is $\Del$-continuous and is unique.

Now we argue that the Eilenberg-Moore properties~\eqref{eq:EM} hold. The right-hand property holds since $\eta_A(a)$ is the term $a$.

For the left-hand property, we wish to show that $\Del{\eps_A(t)}^A={\mu_A(t)}^A$ for $t\in\Del^2 A$. For any such $t$, there exist $u\in\Del X$ and $h:X\to\Del A$ such that $t=\Del h(u)$, the coterm obtained by simultaneously substituting $h(x)$ for $x$ in $u$ for all $x\in X$.

Let us write $h'\ll h$ if $h'(x)\ll h(x)$ for all $x$. If $G$ is a collection of maps $g:X\to A$ such that $\set{g(x)}{g\in G}$ has a supremum for all $x$, define $(\bigvee G)(x)=\bigvee\set{g(x)}{g\in G}$. By~\eqref{eq:EMdef},
\begin{align*}
\eps_A\circ h &= \bigvee\set{\eps_A\circ h'}{h'\ll h}.
\end{align*}
It follows that for $v\in\FT X$,
\begin{align*}
{\FT(\eps_A\circ h)(v)}^A
&= \FT(\bigvee\set{\eps_A\circ h'}{h'\ll h})(v)^A = \bigvee\set{{\FT(\eps_A\circ h')(v)}^A}{h'\ll h} 
\end{align*}
since the algebraic operations of $A$ preserve suprema of nonempty $\Del$-sets. Also,
\begin{gather*}
\Del\eps_A(t) = \Del\eps_A(\Del h(u)) = \Del(\eps_A\circ h)(u)\\
s\ll\Del\eps_A(t) \Iff s\ll\Del(\eps_A\circ h)(u) \Iff \exists v\ll u\ s=\FT(\eps_A\circ h)(v),
\end{gather*}
therefore
\begin{align}
\Del{\eps_A(t)}^A &= \bigvee\set{s^A}{s\ll \Del\eps_A(t)} = \bigvee\set{{\FT(\eps_A\circ h)(v)}^A}{v\ll u}\nonumber\\
&= \bigvee\set{\bigvee\set{{\FT(\eps_A\circ h')(v)}^A}{h'\ll h}}{v\ll u}\nonumber\\
&= \bigvee\set{{\FT(\eps_A\circ h')(v)}^A}{h'\ll h,\ v\ll u}.\label{eq:EM4}
\end{align}
Similarly,
\begin{align*}
s\ll\mu_A(t) &\Iff s\ll\mu_A(\Del h(u)) \Iff \exists h'\ll h\ \exists v\ll u\ s=\mu_A(\FT h'(v)),
\end{align*}
therefore
\begin{align}
{\mu_A(t)}^A &= \bigvee\set{s^A}{s\ll\mu_A(t)} = \bigvee\set{{\mu_A(\FT h'(v))}^A}{h'\ll h,\ v\ll u}.\label{eq:EM5}
\end{align}
Thus~\eqref{eq:EM4} and~\eqref{eq:EM5} are equal provided
\begin{align*}
{\FT(\eps_A\circ h')(v)}^A &= {\mu_A(\FT h'(v))}^A
\end{align*}
for any $v\in\FT X$ and $h':X\to\FT A$. But this is just the left-hand diagram of~\eqref{eq:EM} for the functor $\FT$, which holds since $A$ is an Eilenberg-Moore algebra for $\FT$.
\end{proof}

\begin{defi}
A \emph{$\Del$-algebra} is an Eilenberg-Moore algebra for the monad $\Del$ on an ordered set $A$ with $\bot$ such that the evalution map $\eps_A:\Del A\to A$ is a $\Del$-continuous $\Sigma$-algebra morphism (preserves suprema of $\Del$-sets in $A$). A \emph{morphism} $h:A\to B$ of $\Del$-algebras is a strict monotone function that commutes with the evaluation maps $\eps_A:\Del A\to A$ and $\eps_B:\Del B\to B$.
\end{defi}

It follows that any morphism of $\Del$-algebras is $\Del$-continuous, as we now show.

\begin{thm}%
\label{thm:Delalg2}
Let $A$ and $B$ be $\Del$-algebras and $h:A\to B$ a $\Del$-algebra morphism. For any $\Del$-set $D\subs A$,
its image $\set{h(a)}{a\in D}\subs B$ is a $\Del$-set in $B$, and
\begin{align*}
h(\bigvee D) &= \bigvee \set{h(a)}{a\in D}.
\end{align*}
\end{thm}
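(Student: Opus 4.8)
The plan is to unwind both sides using the definition of the structure maps as suprema of $\ll$-approximations, and then move $h$ through the supremum using the fact that $h$ is an Eilenberg--Moore morphism together with the compatibility of $h$ with the finite term structure. First I would record the two easy observations: since $D$ is a $\Del$-set in $A$, there is $t\in\Del A$ with $D=\set{\alpha(s)}{s\ll t}$, where here $\alpha$ also denotes the restriction to $\FT A$ that makes $(A,\alpha)$ an Eilenberg--Moore algebra for the term monad. Because $h$ is a morphism of ordered algebras with $\bot$, it commutes with the finite term structure maps; hence applying $h$ pointwise to a finite term $s\in\FT A$ gives $h(\alpha(s)) = \beta(\bar h(s))$, where $\bar h(s)\in\FT B$ is the term obtained from $s$ by relabelling each leaf $a$ by $h(a)$. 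Now consider the coterm $\Del h$ applied to $t$, i.e.\ the element $\bar h(t)\in\Del B$ obtained by relabelling the leaves of $t$ by $h$; since $h(A)\subs B$ and we may view $h$ as a map $A\to B\subs\CT B$ whose induced $\omega$-continuous morphism sends $A$ into $\Del B$, the quasi-regularity assumption gives $\bar h(t)\in\Del B$. The key combinatorial point is that $s\ll t$ iff $\bar h(s)\ll\bar h(t)$ — relabelling leaves commutes with deleting subterms — so that $\set{h(a)}{a\in D} = \set{h(\alpha(s))}{s\ll t} = \set{\beta(\bar h(s))}{s\ll t} = \set{\beta(s')}{s'\ll\bar h(t)}$, which exhibits the image as a $\Del$-set in $B$ witnessed by $\bar h(t)$.

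With the $\Del$-set claim in hand, the supremum identity is almost immediate. By definition of the extended $\beta$ on $\Del B$ applied to $\bar h(t)$, we have $\beta(\bar h(t)) = \bigvee\set{\beta(s')}{s'\ll\bar h(t)} = \bigvee\set{h(a)}{a\in D}$, so it suffices to show $h(\bigvee D) = \beta(\bar h(t))$. But $\bigvee D = \bigvee\set{\alpha(s)}{s\ll t} = \alpha(t)$ by the definition of the extended $\alpha$, and $h(\alpha(t)) = \beta(\Del h(t)) = \beta(\bar h(t))$ because $h$, being a $\Del$-algebra morphism, commutes with the structure maps $\alpha:\Del A\to A$ and $\beta:\Del B\to B$ via the naturality square $\beta\circ\Del h = h\circ\alpha$. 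Chaining these equalities yields $h(\bigvee D) = h(\alpha(t)) = \beta(\bar h(t)) = \bigvee\set{h(a)}{a\in D}$, as required.

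I expect the main obstacle to be the bookkeeping around the ``relabelling'' operator $\bar h = \Del h$ and the verification that $\bar h(t)$ genuinely lands in $\Del B$ rather than merely in $\CT B$. This is exactly where quasi-regularity is needed: one has to produce an $\omega$-continuous morphism $\CT A\to\CT B$ extending $h:A\to B$, check that it maps the generators $A$ into $\Del B$, and then invoke the closure property defining a quasi-regular family to conclude it maps $\Del A$ into $\Del B$; some care is needed because $A$ is an arbitrary ordered algebra, not of the form $\CT X$, so the morphism in question is $\CT$ applied to the underlying set map of $h$ followed by the structure map of $B$, and one should confirm this is the same as the set-theoretic relabelling on finite terms. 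A secondary, purely routine point is the identity $\bigvee\set{\alpha(s)}{s\ll t} = \alpha(t)$ for $t\in\Del A$, which is just the definition of the extended structure map from Theorem~\ref{thm:Delalg1}; and the leaf-relabelling lemma $s\ll t \iff \Del h(s)\ll\Del h(t)$, which follows because deleting a subterm and relabelling leaves are independent operations on labelled trees.
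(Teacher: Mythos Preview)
Your argument is correct and follows essentially the same route as the paper: pick a witness $t\in\Del A$, use the morphism property $h\circ\alpha=\beta\circ\Del h$ to rewrite $\set{h(\alpha(s))}{s\ll t}$ as $\set{\beta(u)}{u\ll\Del h(t)}$, and then read off both conclusions from the definition of the extended structure maps. The only difference is presentational: where you carefully invoke quasi-regularity to justify that $\bar h(t)\in\Del B$, the paper simply writes $\Del h(t)$, taking for granted that $\Del$ is already a functor (submonad of $\CT$) on the underlying category, so that $\Del h:\Del A\to\Del B$ is automatic; and your ``iff'' $s\ll t\Leftrightarrow\bar h(s)\ll\bar h(t)$ is slightly stronger than needed (and not literally true when $h$ is non-injective), but the set equality you actually use, $\set{\bar h(s)}{s\ll t}=\set{u}{u\ll\bar h(t)}$, is correct and is exactly what the paper asserts.
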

\begin{proof}
Let $t\in\Del A$ such that $D = \set{s^A}{s\ll t}$. Then $t^A=\bigvee D$. We must show that $\set{h(s^A)}{s\ll t}$ is a $\Del$-set in $B$ and
\begin{align*}
h(t^A) &= \bigvee \set{h(s^A)}{s\ll t}.
\end{align*}
Since $h$ commutes with the evaluation maps, $h(s^A)=\Del {h(s)}^B$ for any $s\in\FT A$; thus
\begin{align*}
\set{h(s^A)}{s\ll t}
&= \set{\Del {h(s)}^B}{s\ll t}
= \set{u^B}{u\ll \Del h(t)},
\end{align*}
and this is a $\Del$-set in $B$. Moreover,
\begin{align*}
h(t^A)
= \Del {h(t)}^B
= \bigvee \set{u^B}{u\ll \Del h(t)}
= \bigvee \set{h(s^A)}{s\ll t}.
\tag*{\qedhere}
\end{align*}
\end{proof}

We have shown that every $\Del$-regular $\Sigma$-algebra extends canonically to a $\Del$-algebra. Conversely, every $\Del$-algebra is $\Del$-regular: suprema of $\Del$-sets exist, as $\bigvee(\set{s^A}{s\ll t})=t^A$ for $t\in\Del A$, and the algebraic operations are $\Del$-continuous, since for $\seq t1n\in\Del A$,
\begin{align*}
\lefteqn{f^A(\bigvee\set{s_1^A}{s_1\ll t_1},\ldots,\bigvee\set{s_n^A}{s_n\ll t_n})}\quad\\
&= f^A(t_1^A,\ldots,t_n^A) = {f(\seq t1n)}^A
= \bigvee\set{s^A}{s\ll f(\seq t1n)}\\
&= \bigvee\set{{f(\seq s1n)}^A}{s_i\ll t_i,\ 1\le i\le n}
= \bigvee\set{f^A(\seq{s^A}1n)}{s_i\ll t_i,\ 1\le i\le n}.
\end{align*}
Thus the categories of $\Del$-algebras and $\Del$-regular algebras with strict $\Del$-continuous morphisms are equivalent. Henceforth, we drop the ``regular'' and just call them $\Del$-algebras.

\begin{exa}
An $\FT$-algebra is just an ordered $\Sigma$-algebra.
If $\Del X$ is the collection of all regular coterms over $X$ (those with
only finitely many subterms up to isomorphism),
then a $\Del$-algebra is an ordered regular algebra~\cite{Tiuryn78,Tiuryn79}.
If $\Del X$ is the set of all regular coterms of order $n$,
$n\geq 0$, then a $\Del$-algebra is an $n$-regular (or $n$-rational) algebra~\cite{Gallier81}.
\end{exa}

We have defined $\Del$-algebras $A$ in terms of suprema of $\Del$-sets in $A$.
However, one could generalize the notion of $\Del$-set to include
any set of terms with a supremum in $\Del A$.

Suppose that $A$ is a $\Del$-algebra. A subset $E\subs\Del A$ is
said to be \emph{consistent} if any two elements of $E$, considered
as labeled trees, agree wherever both are defined. That is, if $s,t\in E$
and $\alpha\in\dom s\cap\dom t$, then $s(\alpha)=t(\alpha)$. Any consistent set
has a unique supremum in $\CT A$.

We say that a set $D\subs A$ is an \emph{extended $\Del$-set}
if there is a consistent set $E\subs\Del A$ such that $\bigvee E\in\Del A$
and $D = \set{t^A}{t\in E}$.

We state the following theorem without proof, but it is not difficult to prove
using the same technique as Theorems~\ref{thm:Delalg1} and~\ref{thm:Delalg2}.
\begin{thm}%
\label{thm:Delalg3}
The algebraic operations of any $\Del$-algebra and all $\Del$-algebra morphisms
preserve suprema of nonempty extended $\Del$-sets.
\end{thm}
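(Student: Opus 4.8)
The plan is to mimic the proof structure of Theorems~\ref{thm:Delalg1} and~\ref{thm:Delalg2}, replacing ``$\Del$-set'' by ``extended $\Del$-set'' throughout, with the key observation being that an extended $\Del$-set is already ``almost'' an ordinary $\Del$-set in a larger sense: if $D = \set{\alpha(t)}{t\in E}$ for a consistent $E\subs\Del A$ with $\bigvee E = t_0\in\Del A$, then every $t\in E$ satisfies $t\leq t_0$ as a labeled tree (since $E$ is consistent, each $t\in E$ agrees with $t_0$ wherever defined), hence the finite approximants $\set{s}{s\ll t,\ t\in E} = \set{s}{s\ll t_0}$ (the forward inclusion is immediate; the reverse holds because any finite $s\ll t_0$ has only finitely many nodes, and by directedness of $E$ some single $t\in E$ already contains all of them). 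Consequently $\bigvee\set{\alpha(s)}{s\ll t,\ t\in E} = \bigvee\set{\alpha(s)}{s\ll t_0} = \alpha(t_0) = \bigvee E$ interpreted in $A$, so that $\bigvee D = \alpha(t_0)$ and in particular the supremum of an extended $\Del$-set always exists in a $\Del$-algebra and equals the interpretation of $\bigvee E$.

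First I would establish that last fact as a lemma: in any $\Del$-algebra $(A,\alpha)$, a consistent set $E\subs\Del A$ with $\bigvee E = t_0\in\Del A$ has $\bigvee\set{\alpha(t)}{t\in E} = \alpha(t_0)$. This reduces the whole theorem to the already-proven statements about $\Del$-sets, because $D = \set{\alpha(t)}{t\in E}$ and $\set{\alpha(s)}{s\ll t_0}$ are cofinal in each other (each element of one is below some element of the other), so they have the same supremum, namely $\alpha(t_0)$.

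Next, for preservation by algebraic operations: given $f\in\Sigma_n$ and extended $\Del$-sets $D_1,\ldots,D_n$ with witnessing consistent sets $E_i$ and $\bigvee E_i = t_i\in\Del A$, I would note that $f(t_1,\ldots,t_n)\in\Del A$ (since $\Del A$ is closed under $\Sigma$) and that $f(E_1,\ldots,E_n) := \set{f(u_1,\ldots,u_n)}{u_i\in E_i}$ is consistent with supremum $f(t_1,\ldots,t_n)$. Then applying the lemma twice, plus the already-known fact (Theorem~\ref{thm:Delalg1}, equation~\eqref{eq:mm}) that $\alpha$ commutes with substitution, gives
\begin{align*}
f^A(\bigvee D_1,\ldots,\bigvee D_n)
&= f^A(\alpha(t_1),\ldots,\alpha(t_n))
= \alpha(f(t_1,\ldots,t_n))\\
&= \bigvee\set{\alpha(f(u_1,\ldots,u_n))}{u_i\in E_i}
= \bigvee\set{f^A(\alpha(u_1),\ldots,\alpha(u_n))}{u_i\in E_i},
\end{align*}
which is exactly preservation of the extended $\Del$-sets. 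For morphisms $h:(A,\alpha)\to(B,\beta)$, the argument of Theorem~\ref{thm:Delalg2} transfers verbatim: $\Del h$ maps the consistent set $E$ to a consistent set $\Del h(E)\subs\Del B$ with $\bigvee\Del h(E) = \Del h(t_0)\in\Del B$ (monotonicity and the fact that $\Del h$ preserves the tree shape), so $\set{h(a)}{a\in D}$ is an extended $\Del$-set in $B$, and $h(\bigvee D) = h(\alpha(t_0)) = \beta(\Del h(t_0)) = \bigvee\set{h(a)}{a\in D}$ by the lemma applied in $B$.

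The main obstacle I expect is the cofinality/directedness bookkeeping in the lemma — specifically, verifying carefully that $\set{s\in\FT A}{s\ll t\text{ for some }t\in E} = \set{s}{s\ll t_0}$, which needs that $E$ is directed under $\ll$ (any finitely many trees from a consistent set have a common upper bound in the set — this requires a small argument, or one restricts attention to a directed cofinal subset, or uses that $t_0 = \bigvee E$ can itself be approximated by finite $s$ each of which, having finitely many nodes, is dominated by a single member of $E$). Once that combinatorial point is nailed down, everything else is a routine transfer of the earlier proofs, which is why the paper elects to omit it.
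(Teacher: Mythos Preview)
Your central lemma---that for a consistent $E\subs\Del A$ with $t_0=\bigvee E\in\Del A$ the extended $\Del$-set $D=\set{\alpha(t)}{t\in E}$ satisfies $\bigvee D=\alpha(t_0)$---is false, and with it the cofinality claim and the set equality $\set{s}{\exists t\in E,\ s\ll t}=\set{s}{s\ll t_0}$ on which it rests. Take $\Sigma$ containing a binary $f$, let $A$ be any $\Del$-algebra with $f^A(x,\bot^A)=f^A(\bot^A,y)=\bot^A$ but $f^A(a,b)\ne\bot^A$ for some $a,b\in A$ (e.g.\ any nontrivial ordered semiring with $f$ as product), and set $E=\{f(a,\bot),\,f(\bot,b)\}\subs\FT A\subs\Del A$. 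This $E$ is consistent with $t_0=f(a,b)$, yet $D=\{f^A(a,\bot^A),\,f^A(\bot^A,b)\}=\{\bot^A\}$, so $\bigvee D=\bot^A\ne f^A(a,b)=\alpha(t_0)$. The finite approximant $f(a,b)\ll t_0$ is not $\ll$ any member of $E$, and $\alpha(f(a,b))$ is not below any element of $D$, so neither your set equality nor cofinality holds. None of the three fixes you float in the final paragraph helps: a consistent set need not be directed, need not contain any directed cofinal subset, and a finite $s\ll t_0$ need not be dominated by any single element of $E$. This is not bookkeeping; it is the crux.

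Note that in this example the theorem itself is not violated: $D=\{\bot^A\}$ has supremum $\bot^A$, which every operation and morphism preserves trivially. The theorem asserts only that whatever supremum $D$ has is preserved by operations and morphisms, not that this supremum equals $\alpha(\bigvee E)$. So your reduction of the extended case to the ordinary $\Del$-set case via $\alpha(t_0)$ is the wrong decomposition; a correct argument must work with $\bigvee D$ as it actually is, and cannot identify it with $\alpha(\bigvee E)$.
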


Every $\Del$-set is an extended $\Del$-set, so we can replace the definition
of $\Del$-regular algebra by the stronger property that suprema of all nonempty extended
$\Del$-sets exist and are preserved by the algebraic operations.

\section{Main Result}

We can introduce varieties of $\Del$-algebras defined
by sets of inequalities between finite partial terms in $\FT X_\omega$
in the expected way.

Let $\cVo$, $\cVd$, and $\cV$ be the varieties of all $\omega$-continuous $\Sigma$-algebras,
$\Del$-algebras, and ordered $\Sigma$-algebras,
respectively, defined by a set of inequalities between finite partial terms.
Let $\FoX$ denote the free $\omega$-continuous
$\Sigma$-algebra on generators $X$ in $\cVo$ (see~\cite{Guessarian81}).
Extend $\FoX$ to a $\CT$-algebra as in Theorem~\ref{thm:Delalg1}, and
let $\FdX$ denote the $\Del$-subalgebra of $\FoX$ generated by $X$. The elements of $\FdX$ are the elements of $\FoX$ denoted by the coterms in $\Del X$:
\begin{align*}
\FdX &= \set{t^{\FoX}}{t\in\Del X},
\end{align*}
where $t^{\FoX}$ is the value of $t$ in $\FoX$.
Let $\FX$ denote the free ordered $\Sigma$-algebra on generators $X$ in $\cV$.

\begin{thm}%
\label{thm:main}
$\FdX$ is the free $\Del$-algebra in $\cVd$ on generators $X$. It is isomorphic
to the completion of $\FX$ by $\Del$-ideals.
\end{thm}
\begin{proof}
Since $\FdX$ embeds in $\FoX$ and $\FoX\in\cVo\subs\cVd$, we have that $\FdX\in \cVd$ by Birkhoff's theorem for ordered algebras~\cite{Bloom76}.

Let $A\in\cVd$ and $h:X \to A$.
Consider the completion $\Io A$ of $A$ by $\omega$-ideals,
which is an $\omega$-continuous algebra in $\cVo$~\cite{Bloom76,Guessarian81}.
The function $a\mapsto\down{\{a\}}$ embeds $A$ in $\Io A$,
and implicitly composing with this map allows us to view $h$ as a function $X\to \Io A$.
Since $\Io A \in \cVo$, there is a unique extension
of $h$ to a morphism $\hat h:\FoX \to \Io A$ of $\omega$-continuous algebras.
The restriction of $\hat h$ to $\FdX$ is a $\Del$-algebra
morphism $\FdX \to \Io A$. Let $\hat A$ denote the image of $\FdX$ under
$\hat h$. Then $\hat A$ is a $\Del$-subalgebra
of $\Io A$, and $\hat A$ is in $\cVd$ because it is a homomorphic image of $\FdX$, which is in $\cVd$.

Finally, let $\bigvee:\hat A\to A$ be the supremum operator. The elements
of $\hat A$ are $\Del$-ideals, therefore their suprema
exist and lie in $A$. The map $\bigvee$, being a component of the counit of the completion construction,
is a morphism of $\Del$-algebras; that is, a strict monotone function that preserves
suprema of $\Del$-sets and commutes with the algebraic operations.
The restriction of $\hat h$ composed with $\bigvee$ is the required unique extension
of $h$ to a morphism $\FdX\to A$ of $\Del$-algebras.

We can also construct $\FdX$ directly as the free extension of $\FX$ by $\Del$-ideals.
Here we use the general construction of \S\ref{sec:completion}. Let $\IDel$ be the monad of $\Del$-ideals in $\FX$.
We must check that $\IDel$ satisfies the condition~\eqref{eq:continuous}. For
$\Del$-ideals $\set{s^A}{s\ll t_i}\in\IDel\FX$ defined by coterms $\seq t1n\in\Del\FX$, $1\le i\le n$,
\begin{align*}
&\down{\set{f^A(\seq{s^A}1n)}{s_i^A\in\set{s^A}{s\ll t_i},\ 1\le i\le n}} \\
&\quad= \down{\set{{f(\seq s1n)}^A}{s_i\ll t_i,\ 1\le i\le n}}\\
&\quad= \down{\set{u^A}{u\ll f(\seq t1n)}} \in \IDel\FX.
\end{align*}
This allows algebraic operations on $\IDel\FX$ to be defined as in~\eqref{eq:continuous} and extended as in Theorem~\ref{thm:Delalg1} to give a $\Del$-algebra. Moreover, again by Birkhoff's theorem for ordered algebras, $\IDel\FX\in\cVd$ because it is a subalgebra of $\Io\FX\in\cVo\subs\cVd$~\cite{Bloom76,Guessarian81}.

We now argue that $\FdX$ and $\IDel\FX$ are isomorphic. Since $\IDel\FX\in\cVd$ and $\FdX$ is free for that variety, there is a unique $\Del$-algebra morphism $h:\FdX\to\IDel\FX$ extending the identity on $X$. Conversely, since $\FdX\in\cVd\subs\cV$ and $\FX$ is free in $\cV$, there is a unique $\Sigma$-algebra morphism $g:\FX\to\FdX$ extending the identity on $X$. By Theorem~\ref{thm:univ}, $g$ extends to a $\Del$-continuous morphism $\hat g:\IDel\FX\to\FdX$. The composition $\hat g\circ h:\FdX\to\FdX$
is the unique morphism on $\FdX$ extending the identity on $X$, therefore must be the identity morphism. Likewise, by Theorem~\ref{thm:univ}, the composition $h\circ\hat g:\IDel\FX\to\IDel\FX$ is the unique $\Del$-continuous morphism on $\IDel\FX$ extending the identity on $\FX$, therefore must be the identity morphism. Thus $h$ and $\hat g$ are inverses and $\FdX$ and $\IDel\FX$ are isomorphic.
\end{proof}

\section{Examples}%
\label{sec:examples}

\begin{exa}
The free $\omega$-continuous semiring on generators
$X$ is the semiring of power series $\naturals_\infty\angles{X\star}$
with countable support
having coefficients in $\naturals_\infty$, where $\naturals_\infty$ is obtained from the semiring of natural numbers
by adding a point at infinity; equivalently, the family of finite and countably infinite
multisets of strings in $X\star$.
The free regular semiring over $X$ is the semiring
$\naturals_\infty^{\rm alg}\angles{X\star}$ of all algebraic elements of $\naturals_\infty\angles{X\star}$,
and the free linear semiring is the semiring
$\naturals_\infty^{\rm rat}\angles{X\star}$ of all rational elements of $\naturals_\infty\angles{X\star}$.

In these examples, the signature is $+, \cdot, 0, 1$.
The free $\omega$-continuous semiring $\naturals_\infty\angles{X\star}$ is isomorphic to the
$\omega$-algebra of coterms $\CT X$ over this signature modulo the semiring axioms
and contains all suprema defined by coterms in $\CT X$.
Theorem~\ref{thm:main} says that the algebraic elements
$\naturals_\infty^{\rm alg}\angles{X\star}$ are
the images in $\naturals_\infty\angles{X\star}$ of the regular coterms $\RT X$, those
with only finitely many distinct subterms up to isomorphism.
The rational elements
$\naturals_\infty^{\rm rat}\angles{X\star}$ are the images of the linear coterms $\LT X$, those
defined as unrollings of finite systems of linear affine equations of the form
\begin{align}
x = a_1y_1 + \cdots + a_{n}y_n + b.\label{eq:starcont}
\end{align}
\end{exa}

\begin{exa}%
\label{ex:ex2}
The free $\omega$-continuous idempotent semiring on $X$ is the
semiring of all finite and countable languages in $X\star$.
The $\omega$-continuous idempotent semirings are the same as the
closed semirings used in the study of shortest-path algorithms~\cite{AHU75}.
Here the signature is again $+,\cdot,0,1$, although by the construction
of Theorem~\ref{thm:Delalg1}, one can adjoin a countable
summation operator $\sum$ that takes the supremum
of any countable set in the natural order $x\le y$ iff $x+y=y$.
\end{exa}

\begin{exa}%
\label{ex:ex3}
The free star-continuous Kleene algebra on generators $X$
is the algebra of regular subsets of $X\star$~\cite{K90a}. These are the
rational elements of the previous example. This is the free $\LT$-algebra for $\LT$ the set of
linear coterms described in~\eqref{eq:starcont}. The signature has been augmented with a
star operation $a\star$ that gives the least solution of the equation $x = 1+ax$.

The completeness and continuity conditions imposed by~\eqref{eq:starcont} can be succinctly axiomatized by saying that for every $a,b,c\in K$, $ab\star c$ is the supremum of $\set{ab^{n}c}{n\geq 0}$ in the natural order of the idempotent semiring. A Kleene algebra is called \emph{star-continuous} when it satisfies this condition. This is a special case of~\eqref{eq:starcont} corresponding to the system
\begin{align*}
x &= ay & y &= by + c
\end{align*}
which generates the regular coterm obtained by unwinding
\begin{gather*}
\begin{tikzpicture}[-, >=stealth', node distance=10mm, inner sep=1pt, auto]
\small
 \useasboundingbox (0,-2.2) rectangle (0,.2);
 \node (root) {$\cdot$};
 \node (0) [below left of=root] {$a$};
 \node (1) [below right of=root] {$+$};
 \node (10) [below left of=1] {$\cdot$};
 \node (11) [below right of=1] {$c$};
 \node (100) [below left of=10] {$b$};
 \path (root) edge (0) edge (1);
 \path (1) edge (10) edge (11);
 \path (10) edge (100);
 \draw[->] (10) .. controls (2,-4) and (3,1) .. (1);
\end{tikzpicture}
\end{gather*}

The free $\omega$-continuous idempotent semiring of Example~\ref{ex:ex2} is the completion of the free star-continuous
Kleene algebra by countably generated star-ideals~\cite{K90a}.
\end{exa}

\begin{exa}%
\label{ex:ex4}
As first observed by Mezei and Wright~\cite{MezeiWright67}, the context-free languages
are the least solutions of finite systems of algebraic fixpoint equations of the form
\begin{align}
x = f(\seq y1n)\label{eq:algebraic}
\end{align}
in the idempotent semiring of languages. Such systems are essentially context-free grammars.
For example, the context-free language $\set{a^{n}b^n}{n\geq 0}$
is the least solution of the equation $x = axb + 1$ in this algebra, or in grammar form, $S \to aSb \mid \eps$.

The family of context-free languages
is the free $\RT$-algebra for $\RT X$ the set of regular coterms over $X$ defined by finite systems
of equations of the form~\eqref{eq:algebraic}
in the variety of idempotent semirings. For example, the fact that a least solution of $x = axb+1$
exists is the completeness condition expressed by the regular coterm obtained by unwinding
\begin{gather*}
\begin{tikzpicture}[-, >=stealth', node distance=10mm, inner sep=1pt, auto]
\small
 \useasboundingbox (0,-2.2) rectangle (0,.2);
 \node (root) {$+$};
 \node (0) [below left of=root] {$\cdot$};
 \node (1) [below right of=root] {$\eps$};
 \node (01) [below of=0] {};
 \node (00) [left of=01] {$a$};
 \node (02) [right of=01] {$b$};
 \path (root) edge (0) edge (1);
 \path (0) edge (00) edge (02);
 \draw[->] (0) .. controls (-1.5,-4.5) and (-4,1) .. (root);
\end{tikzpicture}
\end{gather*}
This coterm must have an interpretation in the algebra and must be the supremum of all its finite truncations.
Furthermore, by composing a coterm on the left and right with arbitrary $c$ and $d$, the resulting
coterm must also be the supremum of its truncations; this expresses the continuity of multiplication
with respect to definable suprema.

As in Example~\ref{ex:ex3}, the completeness and continuity conditions imposed by~\eqref{eq:algebraic}
can be axiomatized by a special case, namely the \emph{$\mu$-continuity axiom}, which says that $a(\mu x.b)c$ is the supremum of $\set{a(nx.b)c}{n\ge 0}$, where $0x.b = \bot$, $(n+1)x.b = b[x/nx.b]$, and $\mu x.b$ is the least solution of the equation $x = b$~\cite{GHK17a}.
\end{exa}

\begin{exa}
The free ``1-regular'' idempotent semiring over an alphabet $X$ is the semiring
of OI-macro languages of Fischer~\cite{Fischer68}, which are the same as the
indexed languages of Aho~\cite{Aho68,AHU75}. These languages are quite useful in
practice, as they capture some syntactic properties that are inexpressible
with context-free grammars; for example, that a variable may only appear
in the scope of its declaration~\cite{ThiemannNeubauer08}. Macro grammars allow these
properties to be specified syntactically and checked during the parsing phase
of compilation.

Indexed languages are generated by \emph{indexed grammars}
in which the nonterminals are tagged with a stack. Productions are of the form
\begin{align*}
& A[\sigma]\to\alpha[\sigma]
&& A[\sigma]\to\alpha[F\sigma]
&& A[F\sigma]\to\alpha[\sigma]
\end{align*}
where $A$ is a nonterminal symbol, $\sigma$ is a stack (string of stack symbols), $F$ is a stack symbol, and $\alpha$ is a string of
terminals and nonterminals. The notation $\alpha[\sigma]$ denotes the string of terminals and tagged nonterminals
in which every nonterminal in $\alpha$ is tagged with $\sigma$. A string $x$ of terminal
symbols is in the language generated by the grammar provided $S[]\derives x$, where $S$ is a designated start symbol.

Indexed grammars can generate non-context-free languages. For example, here are grammars for the sets
$\set{a^{n}b^{n}c^{n}}{n\geq 0}$ and $\set{a^{2^n}}{n\geq 0}$, respectively:
\begin{align*}
& S[\sigma]\to S[F\sigma] \mid A[\sigma]B[\sigma]C[\sigma] && S[\sigma]\to S[F\sigma] \mid T[\sigma]\\
& A[F\sigma]\to aA[\sigma]\qquad A[]\to \eps        && T[F\sigma]\to T[\sigma]T[\sigma]\\
& B[F\sigma]\to bB[\sigma]\qquad B[]\to \eps        && T[]\to a\\
& C[F\sigma]\to cC[\sigma]\qquad C[]\to \eps
\end{align*}
The same class of languages are generated by \emph{OI-macro grammars}~\cite{Fischer68}. These are grammars whose nonterminals may have parameters. For example, the languages $\set{a^{n}b^{n}c^n}{n\geq 0}$ and $\set{a^{2^n}}{n\geq 0}$ are generated by the following two OI-macro grammars, respectively:
\begin{align*}
& S\to F(\eps,\eps,\eps)               && S\to F(a)\\
& F(x,y,z)\to F(ax,by,cz) \mid xyz          && F(x) \to F(xx) \mid x
\end{align*}

As with context-free grammars, indexed grammars and OI-macro grammars can be viewed as schemes for generating a quasi-regular family of coterm algebras $\Delta X$. For example, the coterms generated by the grammars above are, respectively,
\begin{gather}%
\label{eq:1regular}
\begin{array}{c@{\hspace{1cm}}c}
\begin{tikzpicture}[-, >=stealth', node distance=10mm, inner sep=1pt, auto]
 \small
 \useasboundingbox (-1.1,-4.4) rectangle (8.8,.1);
 \node (root) {$+$};
 \node (0) [below left of=root] {$\cdot$};
 \node (1) [below right of=root, node distance=11mm] {$+$};
 \node (01) [below of=0, node distance=5mm] {$\eps$};
 \node (00) [left of=01, node distance=3mm] {$\eps$};
 \node (02) [right of=01, node distance=3mm] {$\eps$};
 \node (10) [below left of=1] {$\cdot$};
 \node (101) [below of=10, node distance=5mm] {$\cdot$};
 \node (100) [left of=101, node distance=7mm] {$\cdot$};
 \node (102) [right of=101, node distance=7mm] {$\cdot$};
 \node (11) [below right of=1, node distance=19mm] {$+$};
 \node (110) [below left of=11] {$\cdot$};
 \node (111) [below right of=11] {};
 \node (1001) [below of=100, node distance=5mm] {};
 \node (1000) [left of=1001, node distance=2mm] {$a$};
 \node (1002) [right of=1001, node distance=2mm] {$\eps$};
 \node (1011) [below of=101, node distance=5mm] {};
 \node (1010) [left of=1011, node distance=2mm, yshift=1pt] {$b$};
 \node (1012) [right of=1011, node distance=2mm] {$\eps$};
 \node (1021) [below of=102, node distance=5mm] {};
 \node (1020) [left of=1021, node distance=2mm] {$c$};
 \node (1022) [right of=1021, node distance=2mm] {$\eps$};
 \node (1101) [below of=110, node distance=5mm] {$\cdot$};
 \node (1100) [left of=1101, node distance=7mm] {$\cdot$};
 \node (1102) [right of=1101, node distance=7mm] {$\cdot$};
 \node (11001) [below of=1100, node distance=5mm] {};
 \node (11000) [left of=11001, node distance=2mm] {$a$};
 \node (11002) [right of=11001, node distance=2mm] {$\cdot$};
 \node (11011) [below of=1101, node distance=5mm] {};
 \node (11010) [left of=11011, node distance=2mm, yshift=1pt] {$b$};
 \node (11012) [right of=11011, node distance=2mm] {$\cdot$};
 \node (11021) [below of=1102, node distance=5mm] {};
 \node (11020) [left of=11021, node distance=2mm] {$c$};
 \node (11022) [right of=11021, node distance=2mm] {$\cdot$};
 \node (110021) [below of=11002, node distance=5mm] {};
 \node (110020) [left of=110021, node distance=2mm] {$a$};
 \node (110022) [right of=110021, node distance=2mm] {$\eps$};
 \node (110121) [below of=11012, node distance=5mm] {};
 \node (110120) [left of=110121, node distance=2mm, yshift=1pt] {$b$};
 \node (110122) [right of=110121, node distance=2mm] {$\eps$};
 \node (110221) [below of=11022, node distance=5mm] {};
 \node (110220) [left of=110221, node distance=2mm] {$c$};
 \node (110222) [right of=110221, node distance=2mm] {$\eps$};
 \path (root) edge (0) edge (1);
 \path (0) edge (01) edge (02) edge (00);
 \path (1) edge (10) edge (11);
 \path (10) edge (100) edge (101) edge (102);
 \path (100) edge (1000) edge (1002);
 \path (101) edge (1010) edge (1012);
 \path (102) edge (1020) edge (1022);
 \path (11) edge (110) edge[dashed] (111);
 \path (110) edge (1100) edge (1101) edge (1102);
 \path (1100) edge (11000) edge (11002);
 \path (1101) edge (11010) edge (11012);
 \path (1102) edge (11020) edge (11022);
 \path (11002) edge (110020) edge (110022);
 \path (11012) edge (110120) edge (110122);
 \path (11022) edge (110220) edge (110222);
 \node (rootA) [right of=root, node distance=60mm, yshift=-2mm] {$+$};
 \node (0) [below left of=rootA] {$a$};
 \node (1) [below right of=rootA, node distance=11mm] {$+$};
 \node (10) [below left of=1] {$\cdot$};
 \node (101) [below of=10, node distance=5mm] {};
 \node (100) [left of=101, node distance=4mm] {$a$};
 \node (102) [right of=101, node distance=4mm] {$a$};
 \node (11) [below right of=1, node distance=19mm] {$+$};
 \node (110) [below left of=11] {$\cdot$};
 \node (111) [below right of=11] {};
 \node (1101) [below of=110, node distance=5mm] {};
 \node (1100) [left of=1101, node distance=4mm] {$\cdot$};
 \node (1102) [right of=1101, node distance=4mm] {$\cdot$};
 \node (11001) [below of=1100, node distance=5mm] {};
 \node (11000) [left of=11001, node distance=2mm] {$a$};
 \node (11002) [right of=11001, node distance=2mm] {$a$};
 \node (11021) [below of=1102, node distance=5mm] {};
 \node (11020) [left of=11021, node distance=2mm] {$a$};
 \node (11022) [right of=11021, node distance=2mm] {$a$};
 \node (dummy) [below of=11021, node distance=7mm] {};
 \path (rootA) edge (0) edge (1);
 \path (1) edge (10) edge (11);
 \path (10) edge (100) edge (102);
 \path (11) edge (110) edge[dashed] (111);
 \path (110) edge (1100) edge (1102);
 \path (1100) edge (11000) edge (11002);
 \path (1102) edge (11020) edge (11022);
\end{tikzpicture}
\end{array}
\end{gather}
These are non-regular trees representing the elements
$\sum_{n=0}^\infty a^{n}b^{n}c^n$ and
$\sum_{n=0}^\infty a^{2^n}$ of a $\Delta$-regular algebra, respectively.

A context-free language is the frontier of a parse tree whose paths are elements of a regular language.
Similarly, an indexed language is the frontier of a parse tree whose paths are elements of a context-free language.
Iterating this idea, one arrives at a hierachy of language families, which have been studied as a model
of higher-order recursion and in linguistics~\cite{KobeleSalvati15,Damm82,DammGoerdt86,EngelfrietSchmidt77,EngelfrietSchmidt78,Courcelle78a,Courcelle78b,BensonGuessarian87,Gallier81,Gallier85,Tiuryn78,Tiuryn79}. These families are defined in terms of finite systems of higher-order fixpoint equations or nested stack automata. Such representations also give rise to syntactically restricted completeness and continuity properties as expressed by suitable submonads of $\CT$.
\end{exa}

\begin{exa}%
\label{ex:iteration}
A final example is given by the category of (\emph{ordered}) \emph{iteration theories}~\cite{BloomEsik93},
ordered Lawvere theories equipped with a fixpoint operator $(^\dagger)$ that provides
solutions to parameterized fixpoint equations. Iteration theories are a general
class of models that capture the equational properties
of structures that arise in domain theory.

It is shown in~\cite{BloomEsik93} that the rational partial coterms over a signature $\Gamma$
form an iteration theory, and in fact
the free ordered iteration theory on generators $\Gamma$. It is further
shown in~\cite{AdamekMiliusVelebil07} that the iteration theories
are precisely the Eilenberg-Moore algebras for the rational coterm monad
on the category of signatures. Let us briefly explain these
results and their significance in the current context.

A \emph{Lawvere theory}, as presented in~\cite{BloomEsik93},
is a category with objects $\naturals$ and distinguished associative coproduct
$+$ such that $n=1+\cdots+1$ ($n$ times). The coprojections are denoted $\inj ni$, $1\le i\le n$,
and the universal coproduct arrow is denoted $[\seq f1n]:n\to m$, where $f_i:1\to m$, $1\le i\le n$.
Lawvere theories are an abstraction of conventional algebraic varieties.
If $\Gamma$ is a ranked alphabet and $f\in\Gamma_n$, then for any $\Gamma$-algebra $A$,
the algebraic operation $f^A:A^n\to A$ is a morphism $f:1\to n$ in the corresponding Lawvere
theory. A morphism $g:n\to m$ models an $n$-tuple of $m$-ary functions. Lawvere theories are
traditionally defined covariantly in terms of products instead of coproducts, but~\cite{BloomEsik93} prefers the contravariant presentation for technical convenience.

An \emph{iteration theory} is a Lawvere theory with an added fixpoint operator $^\dagger$ satisfying
certain equations. The operator $^\dagger$ can be applied to any morphism $e:n\to n+p$
and yields a morphism $e^\dagger:n\to p$. This models the domain-theoretic
construction of taking the least fixpoint of a parameterized map
\begin{align*}
\lam{\bar x,\bar y}{e^A(\bar x,\bar y)}:A^n\times A^p\to A^n
\end{align*}
with respect to induction variables $\bar x=\seq x1n$
and parameters $\bar y = \seq y1p$ to give
\begin{align*}
(e^\dagger)^A &= \lam{\bar y}{\mathsf{fix}\,\bar x\kern1pt.\kern1pt e^A(\bar x,\bar y)}:A^p\to A^n.
\end{align*}

Iteration theories must satisfy a certain infinite set of equational axioms that
essentially characterize when two finite expressions with
composition, cotupling, and $^\dagger$ unwind to the same regular coterm with
composition and cotupling but without $^\dagger$.
A key axiom is
\begin{gather*}
\begin{tikzpicture}[->, >=stealth', auto]
\small
 \useasboundingbox (0,-.7) rectangle (0,.9);
 \node (O) {};
 \node (NW) at (120:.9) {$n$};
 \node (SW) at (240:.9) {$n+p$};
 \node (NE) at (0:.9) {$p$};
 \path (NW) edge node[swap] {$e$} (SW) edge node {$e^\dagger$} (NE);
 \path (SW) edge node[swap] {$[e^\dagger,\id_p]$} (NE);
\end{tikzpicture}
\end{gather*}
which asserts that $e^\dagger$ is indeed a fixpoint. Other axioms handle the interaction of $^\dagger$ with the other
operators and various structural properties.
\end{exa}

Formally, the category of signatures $\Sgn$ is the category of ranked alphabets and rank-preserving functions, or the slice category $\Set/\naturals$ as described in~\cite{AdamekMiliusVelebil07}.
The rational partial coterms $\RT_{\Gamma} X_\omega$, where $X_\omega=\{x_0,x_1,\ldots\}$, form an ordered iteration theory in which
the homsets $n\to m$ are $n$-tuples of coterms with variables
among $\seq x0{m-1}$ (recall that rational coterms have only finitely many isomorphic
subterms, so only finitely many variables). The empty coterm can be used polymorphically at any type $\bot_{nm}:n\to m$.
Composition is substitution and $^\dagger$ gives the limit of the
$k$-fold composition of a coterm with itself for $k=0,1,2,\ldots$~. This
is the free iteration theory on generators $\Gamma$, which says that
the forgetful functor that discards all morphisms but those of type $1\to n$
has a left adjoint. The composition of this left adjoint with the forgetful functor is the monad $\Rat:\Sgn\to\Sgn$ that takes a ranked alphabet $\Gamma$ to its rational partial coterms $\RT_{\Gamma} X_\omega$.

The most general category with the same monad is the category $\RatAlg$ of $\Rat$-algebras.
It is shown in~\cite{AdamekMiliusVelebil07} that these are precisely the iteration theories.
The restriction to rational coterms, as opposed to all coterms, serves to
limit the domain to those objects that have finite representations and are thus
computationally meaningful.

To explain how these results fit in the current context, we must first be careful about the use of the word ``signature'' because there are two different notions of signature in play. As used above, $\Gamma$ represents a conventional ranked alphabet $\Gamma = |\Gamma|\to\naturals$ for single-sorted algebras, an object of the category $\Sgn=\Set/\naturals$. These correspond to the sets $X$ in Theorem 5.1.

The other notion of signature is the signature by which we form partial coterms. This corresponds to the $\Sigma$ in the sense of $\SAlg$ as used throughout this paper. For Lawvere theories, the operations are cotupling, coprojection, and composition. For iteration theories, we add $\dagger$. These are actually typed signatures with infinitely many operations at different types, along with a set of typing rules for term formation. The types are $n\to m$, the types of morphisms of iteration theories. The typing rules are analogous to those for matrix operations on non-square matrices or the operations of typed Kleene algebra~\cite{K98b}.

We can form partial coterms with the operations of Lawvere theories, with cotupling and composition at the internal nodes and elements of $\Gamma$, coprojections, and the empty term acting as $\bot$ at the leaves. These partial coterms form $\CT(\Gamma)$. The regular (or rational) partial coterms form a submonad $\Del$ of $\CT$, thus $\Del(\Gamma)$ are the regular partial coterms over $\Gamma$. Modulo the axioms of Lawvere theories, these form the free (ordered) iteration theory on generators $\Gamma$, which is $F_\Del(\Gamma)$. The operation $\dagger$ is defined as the limit of $k$-fold compositions of a coterm with itself as described above. Theorem 5.1 says that this algebra is isomorphic to the completion of $F(\Gamma)$ by regular ideals, where $F(\Gamma)$ is the algebra of finite partial terms modulo the axioms of Lawvere theories, the free algebra in $\cV$ on generators $\Gamma$.

The notion of typed signature is not technically covered by the definition of signature in \S2, so for this example we do not work with the monads $\FT$ and $\CT$ on $\Set$, but rather with their well-typed submonads on $\Sgn$. The category $\Sgn$ is a concrete category, meaning that there is a faithful functor to $\Set$, thus $\Sgn$ can be regarded as a subcategory of $\Set$. Its objects $\Gamma$ are exactly those sets with the added typing information needed to allow the typed operations of cotupling, composition, and $\dagger$ to be applied in a well-typed way. The application of $\Del$ to such a set results in another signature in $\Sgn$, so the monad $\Del$ on $\Set$, restricted to $\Sgn$, lifts to a monad on $\Sgn$. This is exactly the monad $\Rat$.

\subsection*{Acknowledgments}
Thanks to the anonymous referees for their careful reading and valuable suggestions that greatly improved the presentation. This research was supported in part by NSF grant AiTF-1637532.

\bibliographystyle{plain}
\bibliography{dk,Regalg}

\end{document}